\newtheorem{theorem}{Theorem}
\newtheorem{lemma}{Lemma}
\title{Entropy and  Channel Capacity under Optimum Power and Rate Adaptation  over Generalized  Fading Conditions}
\author{
Paschalis~C.~Sofotasios,~\IEEEmembership{Member,~IEEE},~Sami~Muhaidat,~\IEEEmembership{Senior~Member,~IEEE},~Mikko~Valkama,~\IEEEmembership{Member,~IEEE}, Mounir~Ghogho,~\IEEEmembership{Senior~Member,~IEEE},~and~George~K.~Karagiannidis,~\IEEEmembership{Fellow,~IEEE}



\thanks{P. C. Sofotasios is with the Department of Electronics and Communications Engineering, Tampere University of Technology, 33101 Tampere, Finland and with the Department of Electrical and Computer Engineering, Aristotle University of Thessaloniki, 54124 Thessaloniki, Greece  \, (e-mail: p.sofotasios@ieee.org)  }

\thanks{S. Muhaidat is with the Department of Electrical and Computer Engineering, Khalifa University, PO Box 127788, Abu Dhabi, UAE and with
the Centre for Communication Systems Research, Department of Electronic Engineering, University of Surrey, GU2 7XH Guildford, U.K. (e-mail:
muhaidat@ieee.org)}

\thanks{M. Valkama is with the Department of Electronics and Communications Engineering, Tampere University of Technology, 33101 Tampere, Finland \, (e-mail: mikko.e.valkama@tut.fi)}

\thanks{M. Ghogho is with the School of Electronic and Electrical Engineering, University of Leeds,  LS2 9JT Leeds, U.K. and also with the International University of Rabat, 10100 Rabat, Morocco (e-mail: m.ghogho@ieee.org)}

\thanks{G. K. Karagiannidis is with the Department of Electrical and Computer Engineering, Aristotle University of Thessaloniki, 54124 Thessaloniki, Greece and with the Department of Electrical and Computer Engineering, Khalifa University, PO Box 127788
Abu Dhabi, UAE \, (e-mail: geokarag@ieee.org)}
}
\begin{document}

\maketitle

\begin{abstract}
Accurate fading characterization and channel capacity determination are of paramount importance in both conventional and emerging communication systems. The present work addresses the nonlinearity of the propagation medium and its effects on the channel capacity. Such fading conditions are first characterized using information theoretic measures, namely, Shannon entropy, cross entropy and relative entropy. The corresponding  effects on the channel capacity with and without power adaptation are then analyzed. Closed-form expressions are derived and validated through computer simulations. It is shown that the effects of  nonlinearities are significantly larger than those of fading parameters such as the scattered-wave power ratio, and the correlation coefficient between the in-phase and quadrature components in each cluster of multipath components.
 \end{abstract}

\begin{keywords}
Adaptation policies, channel capacity, entropy. 
\end{keywords}

\section{Introduction}  \label{intro}
\IEEEPARstart{M}{ultipath} fading characterization  is considered a critical task in the effective analysis of the performance of   wireless communication systems.  To this end, the   $\alpha-\mu$,  $\eta-\mu$, $\lambda-\mu$ and   $\kappa-\mu$  distributions have been considered relatively suitable models as they also    include as special cases  the widely known Rayleigh, Nakagami$-m$, Hoyt, Weibull and Rice distributions, see  \cite{B:Alouini, Yacoub_1a, Daniel, Yacoub_2a, Yacoub_3a, Yacoub_4a, B:Sofotasios, New_6, Additional_1, Additional_2, Additional_3, Additional_4, Additional_5, Additional_6, Additional_7, Additional_8, Additional_9, C:Sofotasios_5, Boss_1, Boss_2, Boss_3} and the references therein.  \\
\indent
It is also known that accurate determination of the channel capacity  for different communication scenarios   has been of core importance in wireless communications, including requirements on optimum rate and transmit power constraints. To this end, the average channel capacity over generalized fading channels was thoroughly analyzed in \cite{Daniel}, whereas the channel  capacity under  different adaptation policies  was thoroughly   investigated in \cite{Ermolova, Bithas, Sagias}. \\
\indent
Nevertheless, in spite of the usefulness of the aforementioned  distributions, their accuracy and generality  are limited, since the  non-linearity parameter $\alpha$ in these models is not related to the fading parameters $\eta$, $\lambda$ and $\kappa$ \cite{Yacoub_1}. 
As a consequence, the detrimental effects of  non-linear propagation medium are  not considered in the   majority of   investigations. Motivated by this, the authors in \cite{Yacoub_1} proposed the $\alpha-\eta-\mu$ and  $\alpha-\kappa-\mu$ distributions which constitute a generalization of the $\alpha-\mu$, $\eta-\mu$ and $\kappa-\mu$ models and were shown to provide remarkably accurate fitting to measurement results in various communication scenarios. 
In this context, useful   statistical results and properties for the $\alpha-\lambda-\mu$ and $\alpha-\eta-\mu$ models were reported in \cite{Kotsop_2, Kotsop_3a, Kotsop_3b} whereas the corresponding outage probability was analyzed in \cite{Paschalis}. A simple and remarkably accurate  random sequence generator for $\alpha-\kappa-\mu$ and $\alpha-\eta-\mu$  variates was proposed in \cite{Cogliatti} while the corresponding symbol error rate and channel capacity was addressed in \cite{Ehab_1, Ehab_2}.    
\\
 \indent
However, none of the reported analyses address the corresponding channel capacity with optimum power and rate adaptation. Motivated by this, the aim of the present work is twofold:  we firstly derive novel analytic expressions for the Shannon entropy of  $\alpha-\eta-\mu$ and $\alpha-\lambda-\mu$ distributions. Capitalizing on this, we derive analytic expressions for the  cross-entropy and relative entropy for these distributions with respect to the $\eta-\mu$ and $\lambda-\mu$ distributions, respectively.  
Secondly, we derive novel closed-form expressions for the corresponding channel capacity under optimum rate adaptation as well as optimum power and rate adaptation. 
The derived expressions are subsequently used  in evaluating the corresponding performance, which indicates that the effects of  fading non-linearities are    larger than those of popular and widely considered  fading    parameters. 
%
 %
%
%
%
\section{ Non-Linear Fading Channels and Entropies }

\subsection{The $\alpha-\eta-\mu$ and $\alpha-\lambda-\mu$ Models}

The SNR   PDF of  the $\alpha-\eta-\mu$  distribution is given by \cite{Yacoub_1}
\begin{equation} \label{a-h-m_1}
p_{\gamma}(\gamma) = \frac{\alpha \sqrt{\pi} \mu^{\mu + \frac{1}{2}} (\eta + 1)^{\mu + \frac{1}{2}} \gamma^{\frac{\alpha \mu}{2} +\frac{\alpha}{4} - 1} I_{\mu - \frac{1}{2}} \left( \frac{(\eta^{2} - 1) \mu \gamma^{\frac{\alpha}{2}}}{2 \eta \overline{\gamma}^{\frac{\alpha}{2}}}\right)}{2 \sqrt{\eta} \Gamma(\mu) (\eta - 1)^{\mu - \frac{1}{2}} \overline{\gamma}^{\frac{\alpha \mu}{2} +\frac{\alpha}{4} }  \exp \left(\frac{(1 + \eta)^{2} \mu \gamma^{\frac{\alpha}{2}}}{2 \eta \overline{\gamma}^{\frac{\alpha}{2}}} \right)   }    
 \end{equation} 
where $\alpha$ denotes the  non-linearity of the propagation medium, $\eta$ is the scattered wave power ratio between the in-phase and quadrature components  and $\mu$ is related to the number of multipath clusters. 
 Likewise, the SNR PDF of the  $\alpha-\lambda-\mu$ fading model is expressed as  follows: 
\begin{equation} \label{a-l-m}
p_{\gamma}(\gamma) = \frac{(-1)^{\mu - \frac{1}{2}} \alpha \sqrt{\pi} \mu^{\mu + \frac{1}{2}}}{\Gamma(\mu) \lambda^{\mu - \frac{1}{2}} \sqrt{1 - \lambda^{2}}} \frac{\gamma^{\frac{\alpha \mu}{2} +\frac{\alpha}{4} - 1}}{\overline{\gamma}^{\frac{\alpha \mu}{2} +\frac{\alpha}{4} }} \frac{I_{\mu - \frac{1}{2}} \left( \frac{2 \lambda \mu \gamma^{\frac{\alpha}{2}}}{( \lambda^{2} - 1) \overline{\gamma}^{\frac{a}{2}} } \right)}{\exp \left( \frac{2 \mu \gamma^{\frac{\alpha}{2}}}{(1 - \lambda^{2}) \overline{\gamma}^{\frac{\alpha}{2}}} \right)} 
\end{equation}
where $\lambda$ denotes the correlation coefficient between the  in-phase and quadrature components of the fading signal  \cite{Kotsop_2}. 
\noindent 
It is recalled that the $\alpha-\eta-\mu$ and $\alpha-\lambda-\mu$ distributions include as special cases the $\alpha-\mu$, the $\eta-\mu$ and the $\lambda-\mu$ fading models  for $\alpha = 2$, $\eta = 2$ and $\lambda = 0$, respectively \cite{Yacoub_1}. 

\subsection{Shannon Entropy}

The  Shannon entropy is a fundamental metric which  denotes the amount of information contained in a signal and indicates the number of bits   required for encoding this signal. 
\begin{lemma}
For $\{ \alpha,  \eta, \gamma, \overline{\gamma}\} \in \mathbb{R}^{+}$ and $\mu = 1$,  the Shannon entropy of the  $\alpha-\eta-\mu$ fading distribution is expressed as
\begin{equation*}
\begin{split}
\hspace{-0.1cm}H(p) &=    \frac{\mathcal{B}_{1}}{\alpha \ln(2) } \left\lbrace  \frac{ \gamma' + \ln(\mathcal{C}_{1} + \mathcal{D}_{1})  }{ \sqrt{2 \pi \mathcal{D}_{1}} (\mathcal{C}_{1} + \mathcal{D}_{1})} - \frac{  \gamma' + \ln(\mathcal{C}_{1} - \mathcal{D}_{1})  }{\sqrt{2 \pi \mathcal{D}_{1}} (\mathcal{C}_{1} - \mathcal{D}_{1})}  \right\rbrace   \\
& \hspace{-0.04cm} +    \frac{\mathcal{B}_{1} (3 \alpha - 4) }{\ln(2) \alpha^{2}\sqrt{\pi}} \left\lbrace   \frac{  \gamma' + \ln(\mathcal{C}_{1} - \mathcal{D}_{1})  }{\sqrt{2  \mathcal{D}_{1}} (\mathcal{C}_{1} - \mathcal{D}_{1})}  -  \frac{  \gamma' + \ln(\mathcal{C}_{1} + \mathcal{D}_{1})  }{ \sqrt{2  \mathcal{D}_{1}} (\mathcal{C}_{1} + \mathcal{D}_{1})}   \right\rbrace   \\ 
\end{split}
\end{equation*}
\begin{equation}  \label{Entropy_a-h-m_1}
\hspace{0.4cm}   + \frac{\mathcal{B}_{1} \sqrt{2}}{\ln(2) \alpha  } \left\lbrace \frac{\ln(\mathcal{B}_{1}) - \ln(\sqrt{2 \pi \mathcal{D}_{1}})}{ \sqrt{\pi } \mathcal{D}_{1} ( \mathcal{C}_{1} + \mathcal{D}_{1})} + \frac{ \mathcal{D}_{1} -  \mathcal{C}_{1}\mathcal{D}_{1}^{-1}}{ \sqrt{\pi }(\mathcal{C}_{1} + \mathcal{D}_{1})^{2}} \right.  \quad \, 
   \end{equation}
   \begin{equation*}
\hspace{1.9cm}  \left.    - \frac{\ln(\mathcal{B}_{1}) - \ln(\sqrt{2 \pi \mathcal{D}_{1}})}{\sqrt{\pi } \mathcal{D}_{1}( \mathcal{C}_{1} - \mathcal{D}_{1})} - \frac{\mathcal{D}_{1} - \mathcal{C}_{1}\mathcal{D}_{1}^{-1}}{\sqrt{\pi }(\mathcal{C}_{1} - \mathcal{D}_{1})^{2}}  \right\rbrace  
\end{equation*}
in bits/message where
\begin{equation}
\mathcal{B}_{1} =  \frac{\alpha \sqrt{\pi}  (\eta + 1)^{\frac{3}{2}}}{2 \sqrt{\eta} \sqrt{(\eta - 1)} \overline{\gamma}^{\frac{3 \alpha}{4}}}
\end{equation}
%
%
%
 $\mathcal{C}_{1} = (1 + \eta)^{2}{/}(2 \eta \overline{\gamma}^{\frac{a}{2}})$ and $\mathcal{D}_{1} =  (\eta^{2} - 1) {/} (2 \eta \overline{\gamma}^{\frac{\alpha}{2}})$ with $\gamma' = 0.5772$ denoting the Euler-Mascheroni constant.  
\end{lemma}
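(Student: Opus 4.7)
The starting point is the definition
\begin{equation*}
H(p) = -\int_0^\infty p_\gamma(\gamma)\log_2 p_\gamma(\gamma)\,d\gamma,
\end{equation*}
where $p_\gamma$ is the density \eqref{a-h-m_1} specialised to $\mu = 1$. The first step I would take is to reduce \eqref{a-h-m_1} to elementary form: applying the half-integer identity $I_{1/2}(x) = \sqrt{2/(\pi x)}\sinh(x)$ together with $\sinh(x) = (e^x - e^{-x})/2$, the density collapses to a signed superposition of two Weibull-type terms,
\begin{equation*}
p_\gamma(\gamma) = \frac{\mathcal{B}_1}{\sqrt{2\pi\mathcal{D}_1}}\,\gamma^{\alpha/2 - 1}\bigl[e^{-(\mathcal{C}_1 - \mathcal{D}_1)\gamma^{\alpha/2}} - e^{-(\mathcal{C}_1 + \mathcal{D}_1)\gamma^{\alpha/2}}\bigr].
\end{equation*}
Under the change of variable $u = \gamma^{\alpha/2}$ this measure is easy to work with, and a quick check confirms $\int_0^\infty p_\gamma\,d\gamma = 1$, which also provides the basic integration template for what follows.

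Next, I would split $\ln p_\gamma(\gamma)$ into three additive pieces: the normalising constant $\ln(\mathcal{B}_1/\sqrt{2\pi\mathcal{D}_1})$, the polynomial log-factor $(\alpha/2 - 1)\ln\gamma$, and the log of the exponential difference. Each piece is then integrated against $p_\gamma$ after the substitution $u = \gamma^{\alpha/2}$. The central tool is the identity
\begin{equation*}
\int_0^\infty u^{k}\,e^{-bu}\ln u\,du = \frac{\Gamma(k+1)}{b^{k+1}}\bigl[\psi(k+1) - \ln b\bigr],
\end{equation*}
which, for $k = 0$, reduces to $-(\gamma' + \ln b)/b$ and manufactures exactly the $\gamma' + \ln(\mathcal{C}_1 \pm \mathcal{D}_1)$ clusters visible in \eqref{Entropy_a-h-m_1}. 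The normalising-constant piece contributes the $\ln(\mathcal{B}_1) - \ln\sqrt{2\pi\mathcal{D}_1}$ terms in the third brace (with the $1/(\mathcal{C}_1 \pm \mathcal{D}_1)$ weights coming from the individual Weibull-like normalisers), while the rational remainders $(\mathcal{D}_1 - \mathcal{C}_1\mathcal{D}_1^{-1})/(\mathcal{C}_1 \pm \mathcal{D}_1)^2$ arise from the first-moment integrals $\int_0^\infty u\,e^{-bu}\,du = 1/b^2$ evaluated at $b = \mathcal{C}_1 \pm \mathcal{D}_1$.

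The main technical obstacle is the logarithm of the exponential difference, $\ln[e^{-(\mathcal{C}_1 - \mathcal{D}_1)u} - e^{-(\mathcal{C}_1 + \mathcal{D}_1)u}]$, which has no elementary antiderivative when weighted against the density. The cleanest route I would try is to factor out the dominant exponential,
\begin{equation*}
\ln\!\bigl[e^{-(\mathcal{C}_1 - \mathcal{D}_1)u} - e^{-(\mathcal{C}_1 + \mathcal{D}_1)u}\bigr] = -(\mathcal{C}_1 - \mathcal{D}_1)u + \ln\!\bigl(1 - e^{-2\mathcal{D}_1 u}\bigr),
\end{equation*}
and handle the second summand either by the series $\ln(1 - e^{-2\mathcal{D}_1 u}) = -\sum_{k\ge 1}e^{-2k\mathcal{D}_1 u}/k$ followed by term-by-term integration and resummation, or by the direct splitting $p = p_- - p_+$ applied to $-\int p\ln p$ so that every resulting cross-term falls into the $\int u^{k}e^{-bu}\ln u\,du$ family already invoked. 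Finally, converting from natural to base-2 logarithm (which supplies the global $1/\ln 2$ factor) and regrouping by $(\mathcal{C}_1 + \mathcal{D}_1)$- and $(\mathcal{C}_1 - \mathcal{D}_1)$-components should reproduce the three braces of \eqref{Entropy_a-h-m_1}; the distinct prefactors $1/\alpha$, $(3\alpha - 4)/(\alpha^2\sqrt{\pi})$, and $\sqrt{2}/\alpha$ then emerge by tracking the Jacobian $2/\alpha$ from $u = \gamma^{\alpha/2}$ through each of the three entropy components.
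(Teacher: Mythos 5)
Your setup matches the paper's proof almost step for step: the reduction of $I_{1/2}$ to a difference of exponentials, the split of $\ln p_\gamma$ into a normalising constant, a $\ln\gamma$ piece and the logarithm of the exponential difference, the substitution $u=\gamma^{\alpha/2}$, and the identity $\int_0^\infty u^{k}e^{-bu}\ln u\,{\rm d}u=\Gamma(k+1)b^{-k-1}\bigl[\psi(k+1)-\ln b\bigr]$ (this is \cite[eq. (4.352.1)]{Tables}, the very tool the paper invokes) all appear there, and your accounting of which integral produces which brace of \eqref{Entropy_a-h-m_1} is consistent with it. The gap is in the step you yourself flag as the main technical obstacle. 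Your first proposed route --- expanding $\ln(1-e^{-2\mathcal{D}_1 u})=-\sum_{k\ge 1}e^{-2k\mathcal{D}_1 u}/k$ and integrating term by term --- is exact but leaves a residual infinite series of the form $\sum_{k\ge1}\frac{1}{k}\bigl[(\mathcal{C}_1-\mathcal{D}_1+2k\mathcal{D}_1)^{-1}-(\mathcal{C}_1+\mathcal{D}_1+2k\mathcal{D}_1)^{-1}\bigr]$, a digamma-type sum that does not resum into anything appearing in \eqref{Entropy_a-h-m_1}; carried out exactly, your argument therefore cannot terminate at the stated closed form. Your second alternative --- splitting $p=p_- - p_+$ inside $-\int p\ln p$ --- does not work at all, because $\ln(p_- - p_+)$ does not decompose into cross-terms of the $\int u^{k}e^{-bu}\ln u$ family.

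The missing idea is that the paper does not evaluate this term exactly: it observes that $e^{-x}$ inside $\ln\bigl((e^{x}-e^{-x})/\sqrt{2\pi x}\bigr)$ becomes ``practically negligible'' for $x\gg 0$ and replaces $\ln\bigl(e^{\mathcal{D}_1 u}-e^{-\mathcal{D}_1 u}\bigr)$ by $\mathcal{D}_1 u$ --- equivalently, it drops your $\ln(1-e^{-2\mathcal{D}_1 u})$ correction entirely. The surviving $\mathcal{D}_1 u$ and $-\tfrac{1}{2}\ln(2\pi u)$ pieces then fall into the first-moment and $\ln u$ integral families you already set up, and it is precisely this truncation that yields the rational remainders $(\mathcal{D}_1-\mathcal{C}_1\mathcal{D}_1^{-1})/(\mathcal{C}_1\pm\mathcal{D}_1)^2$ and the $\ln(\sqrt{2\pi\mathcal{D}_1})$ terms in the third brace while keeping the answer elementary. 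In other words, \eqref{Entropy_a-h-m_1} is an approximation (accurate when $\mathcal{D}_1 u$ is not small over the bulk of the density), not an exact identity; unless you adopt the same neglect of the $e^{-2\mathcal{D}_1 u}$ term, your route will not reach the lemma as stated.
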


\begin{proof}
The Shannon entropy for continuous random variables with PDF $p(x)$ is given by  $H(p) \triangleq -  \int_{0}^{\infty} p(x) \log_{2}\left(p(x) \right){\rm d}x$. 
Thus, for the case of $\alpha-\eta-\mu$ fading in \eqref{a-h-m_1} and with the aid of standard logarithmic identities it follows that 
\begin{equation}
\begin{split}
H(p) &= \frac{\mathcal{B}_{1}}{\ln(2)} \int_{0}^{\infty} \left\lbrace \frac{\mathcal{C}_{1}}{\gamma^{1 - \frac{5 \alpha}{4}}} - \frac{\ln(\mathcal{B}_{1})}{\gamma^{1 - \frac{3 \alpha}{4}}} \right\rbrace \frac{  I_{\frac{1}{2}}\left( \mathcal{D}_{1} \gamma^{\frac{\alpha}{2}} \right)}{  e^{\mathcal{C}_{1} \gamma^{\frac{\alpha}{2}}}}   {\rm d}\gamma \\
& - \frac{\mathcal{B}_{1} (3\alpha - 4)}{4 \ln(2)}  \int_{0}^{\infty}  \frac{ \gamma^{ \frac{3 \alpha}{4} -1 }   \ln(\gamma)    I_{\frac{1}{2}}\left( \mathcal{D}_{1} \gamma^{\frac{\alpha}{2}} \right)}{e^{ \mathcal{C}_{1} \gamma^{\frac{\alpha}{2}}} }   {\rm d}\gamma \\
& - \frac{\mathcal{B}_{1}  )}{  \ln(2)}  \int_{0}^{\infty} \frac{ I_{\frac{1}{2}}\left( \mathcal{D}_{1} \gamma^{\frac{\alpha}{2}} \right)  \ln\left[ I_{\frac{1}{2}} \left( \mathcal{D}_{1} \gamma^{\frac{\alpha}{2}} \right) \right]}{ \gamma^{1 -  \frac{3 \alpha}{4} }  e^{ \mathcal{C}_{1} \gamma^{\frac{\alpha}{2}}} }  {\rm d}\gamma. 
\end{split}
\end{equation}
By recalling that $I_{0.5}(x) = (e^{x} - e^{-x})/\sqrt{2 \pi x}$,  one obtains 
\begin{equation*}
H(p) =  \mathcal{B}_{1} \left[ \ln (\mathcal{C}_{1}) - \mathcal{C}_1 \right]  \int_{0}^{\infty}\frac{e^{-\frac{\gamma^{a}}{2}(\mathcal{C}_{1} + \mathcal{D}_1)} - e^{-\frac{\gamma^{a}}{2}(\mathcal{C}_{1} - \mathcal{D}_1)}}{ \ln(2) \sqrt{2 \pi \mathcal{D}_{1}} \gamma^{1- \frac{\alpha}{2}}} {\rm d}\gamma    
\end{equation*}
\begin{equation} \label{entropy_3}
\hspace{0.75cm} + \frac{\mathcal{B}_{1} (3 \alpha -4 )}{4 \ln(2) \sqrt{2 \pi \mathcal{D}_{1}}} \int_{0}^{\infty} \frac{e^{-\gamma^{\frac{\alpha}{2}} (\mathcal{C}_{1} + \mathcal{D}_{1} ) } - e^{-\gamma^{\frac{\alpha}{2}} (\mathcal{C}_{1} - \mathcal{D}_{1} ) }}{\gamma^{1 - \frac{\alpha}{2}}[\ln(\gamma)]^{-1}} {\rm d} \gamma 
\end{equation}
\begin{equation*}
\hspace{0.35cm}  + \mathcal{B}_{1}  \int_{0}^{\infty} \frac{  \gamma^{ \frac{\alpha}{2} -1 } \left\lbrace e^{-\gamma^{\frac{\alpha}{2}}(\mathcal{C}_{1} + \mathcal{D}_{1})} -   e^{-\gamma^{\frac{\alpha}{2}}(\mathcal{C}_{1} - \mathcal{D}_{1})} \right\rbrace }{  \ln\left( \frac{e^{\mathcal{D}_{1} \gamma^{\frac{\alpha}{2}}} -  e^{-\mathcal{D}_{1} \gamma^{\frac{\alpha}{2}}}}{\sqrt{2 \pi \gamma^{\frac{\alpha}{2}}}} \right)^{-1} \ln(2) \sqrt{2 \pi \mathcal{D}_1} }   {\rm d}\gamma. 
\end{equation*}  
The first two integrals in \eqref{entropy_3} can be expressed in closed-form with the aid of \cite[eq. (4.352.1)]{Tables} and \cite[eq. (8.310.1)]{Tables} and involve the digamma function, $\psi(1)$.  By also noticing that  $\exp(-x)$   in the  logarithm of \eqref{entropy_3} becomes practically negligible as $x >> 0$, recalling that  $\psi(1) = - \gamma'$ and after   some algebraic manipulations,  one obtains \eqref{Entropy_a-h-m_1}, which completes the proof. 
 \end{proof}
Importantly,  the Shannon entropy can quantify the detrimental effects of fading non-linearity by determining the variation of required bits per message for each value of $\alpha$. This demonstrates   the  difference between the $\alpha-\eta-\mu$ and $\alpha-\lambda-\mu$ distributions  with the  popular $\eta-\mu$ and $\lambda-\mu$ distributions, respectively, which is also reflected by the considered channel capacity measures. 
This is subsequently analyzed  along with the derivation of useful  analytic expressions for the corresponding   cross entropy and relative entropy metrics.

\subsection{Cross Entropy}

The cross entropy    measures the average number of bits required to encode a message when a distribution $p(x)$ is replaced by a distribution $q(x)$. In the present analysis this can exhibit the number of bits required to encode  $\alpha - \eta - \mu$ and $\alpha - \lambda - \mu$ when they are replaced  by $\eta - \mu$ and $\lambda - \mu$ distributions, respectively, which  corresponds to the case where the non-linearity parameter is not taken into account. 

\begin{lemma}
For $\{ \alpha, \eta, \eta', \gamma, \overline{\gamma}, \overline{\gamma}' \} \in \mathbb{R}^{+}$ and $\mu = 1$,  the cross entropy between   $\alpha-\eta-\mu$ and $\eta-\mu$  distributions  is given by 
\begin{equation} \label{cross_entropy_Def}
H(p, q) = \frac{\ln (\mathcal{B}_{2}) - \ln(\sqrt{2 \pi \mathcal{D}_{2}})}{ \mathcal{B}_{1}^{-1} 2^{-\frac{1}{2}} \alpha \ln(2) \sqrt{\pi \mathcal{D}_{1}}} \left\lbrace \frac{1}{\mathcal{C}_{1} + \mathcal{D}_{1}} - \frac{1}{\mathcal{C}_{1} - \mathcal{D}_{1}} \right\rbrace  \qquad \qquad  \qquad 
\end{equation}
\begin{equation*}
\hspace{1.15cm} + \frac{  \sqrt{2} (\mathcal{C}_{2} - \mathcal{D}_2)  }{    \alpha \ln(2) \sqrt{\pi \mathcal{D}_1}} \left\lbrace 
 \frac{\mathcal{B}_{1} \Gamma\left( \frac{2}{\alpha} - 1 \right)}{(\mathcal{C}_{1} - \mathcal{D}_{1})^{1 + \frac{2}{a}}} - \frac{ \mathcal{B}_{1} \Gamma\left( \frac{2}{\alpha} - 1 \right) }{(\mathcal{C}_{1} + \mathcal{D}_{1})^{1 + \frac{2}{\alpha}}} \right\rbrace
\end{equation*}
\end{lemma}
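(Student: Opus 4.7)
The plan is to mimic the derivation of Lemma 1, but with the log now taken against the $\eta-\mu$ PDF $q(\gamma)$ rather than the $\alpha-\eta-\mu$ PDF itself. I start from the definition $H(p,q) \triangleq -\int_{0}^{\infty} p(\gamma) \log_2 q(\gamma)\,{\rm d}\gamma$, where $p(\gamma)$ denotes the $\alpha-\eta-\mu$ PDF in \eqref{a-h-m_1} with $\mu = 1$ and $q(\gamma)$ denotes the $\eta-\mu$ PDF obtained from \eqref{a-h-m_1} by fixing $\alpha = 2$ (with parameters $\eta'$, $\overline{\gamma}'$ encoded through constants $\mathcal{B}_2$, $\mathcal{C}_2$, $\mathcal{D}_2$ defined in complete analogy with $\mathcal{B}_1$, $\mathcal{C}_1$, $\mathcal{D}_1$).

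First I would invoke the elementary identity $I_{1/2}(x) = (e^{x}-e^{-x})/\sqrt{2\pi x}$, already used in Lemma 1, to write both $p(\gamma)$ and $q(\gamma)$ as differences of (stretched) exponentials. For $q(\gamma)$ this collapses the Bessel factor to a purely exponential expression $q(\gamma) = (\mathcal{B}_2/\sqrt{2\pi\mathcal{D}_2})\bigl[e^{-(\mathcal{C}_2-\mathcal{D}_2)\gamma}-e^{-(\mathcal{C}_2+\mathcal{D}_2)\gamma}\bigr]$, so that $\ln q(\gamma) = \ln(\mathcal{B}_2/\sqrt{2\pi\mathcal{D}_2}) - (\mathcal{C}_2-\mathcal{D}_2)\gamma + \ln\!\bigl(1 - e^{-2\mathcal{D}_2\gamma}\bigr)$. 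Then I would apply the same simplification employed at the end of the proof of Lemma 1, namely that the residual $\ln(1-e^{-2\mathcal{D}_2\gamma})$ is practically negligible since the inner exponential vanishes quickly, leaving only a constant plus a linear-in-$\gamma$ contribution.

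Substituting this truncated form for $\log_2 q(\gamma)$ into the cross-entropy integral reduces the problem to two distinct moment integrals of $p(\gamma)$. The first is $-\log_2(\mathcal{B}_2/\sqrt{2\pi\mathcal{D}_2})\int_0^\infty p(\gamma)\,{\rm d}\gamma$; rather than just invoking normalisation, I would evaluate the integral explicitly with the $\sinh$-form of $p$, using \cite[eq. (3.478.1)]{Tables} after the substitution $u = \gamma^{\alpha/2}$, so that the coefficient structure $\bigl\{1/(\mathcal{C}_1+\mathcal{D}_1) - 1/(\mathcal{C}_1-\mathcal{D}_1)\bigr\}$ visible in \eqref{cross_entropy_Def} is exposed. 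The second integral is $\frac{\mathcal{C}_2-\mathcal{D}_2}{\ln 2}\int_0^\infty \gamma\, p(\gamma)\,{\rm d}\gamma$, which upon the same $u = \gamma^{\alpha/2}$ substitution reduces to two Gamma integrals producing the $\Gamma(\cdot)/(\mathcal{C}_1\pm\mathcal{D}_1)^{1+2/\alpha}$ factors.

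The main obstacle is exactly as in Lemma 1: the logarithm of a difference of exponentials has no elementary closed form, so the whole derivation hinges on justifying that the $\ln(1-e^{-2\mathcal{D}_2\gamma})$ remainder can be discarded without materially affecting the accuracy of the result on the support of $p(\gamma)$. Once that approximation is accepted (consistent with the treatment in Lemma 1), the remaining calculations are a routine combination of the $I_{1/2}$ simplification, the change of variable $u = \gamma^{\alpha/2}$, and a careful bookkeeping of the prefactors $\mathcal{B}_1$, $\sqrt{\mathcal{D}_1}$ and $\alpha$ to match \eqref{cross_entropy_Def}.
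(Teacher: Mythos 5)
Your proposal is correct and follows essentially the same route as the paper, which simply observes that the cross-entropy integrand has the same algebraic structure as the Shannon-entropy integrand of Lemma 1 and defers to that derivation (the $I_{1/2}(x)=(e^{x}-e^{-x})/\sqrt{2\pi x}$ identity, discarding the asymptotically negligible $\ln(1-e^{-2\mathcal{D}_{2}\gamma})$ remainder, and evaluating the resulting stretched-exponential integrals). Your observation that only the normalization and first-moment integrals of $p$ survive after expanding $\ln q$ --- which is why no Euler--Mascheroni term appears here, unlike in Lemma 1 --- correctly captures why the computation is in fact simpler than the paper's bare appeal to Lemma 1 suggests, and your Gamma-integral bookkeeping reproduces the stated coefficient structure.
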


\begin{proof}
The cross entropy between two continuous distributions with PDFs $p(x)$ and $q(x)$ is given by $H(p, q) \triangleq  -\int_{0}^{\infty}p(x) \log_{2}(q(x)){\rm d}x$. 
Therefore, for the case of $\alpha-\eta-\mu$ and $\eta-\mu$ distributions, respectively,  it   follows that
\begin{equation} \label{cross_entropy}
H(p, q) = - \int_{0}^{\infty} \frac{\mathcal{B}_{1} I_{\frac{1}{2}}\left( \mathcal{D}_{1}\gamma^{\frac{\alpha}{2}}\right)}{\gamma^{\frac{1 - \alpha \mu}{2} - \frac{\alpha}{4} } e^{\mathcal{C}_{1}\gamma^{\frac{\alpha}{2}}}} \log_{2}\left( \frac{\mathcal{B}_{2}  I_{\frac{1}{2}}\left( \mathcal{D}_{2}\gamma \right)}{\gamma^{-\frac{1}{2}} e^{\mathcal{C}_{1}\gamma}}  \right) {\rm d}\gamma  
 \end{equation} 
 where  $\mathcal{B}_{2} =  \sqrt{\eta' + 1} \sqrt{\eta' - 1}  \sqrt{\pi}/(\sqrt{\eta'} \sqrt{\overline{\gamma}'})$, 
  $\mathcal{C}_{2} = (1 + \eta')^{2}/(2 \eta' \overline{\gamma'})$ and $\mathcal{D}_{2} = (\eta' + 1)(\eta - 1)/(2 \eta' \overline{\gamma'})$ correspond to the parameters of $\eta-\mu$ distribution. It is evident that the algebraic representation of \eqref{cross_entropy} is similar to  \eqref{Entropy_a-h-m_1}. Based on this, the proof follows immediately using  Lemma 1.  
\end{proof}
In the next Section, it is shown that the  value of the considered  $H(p, q)$  varies substantially at even slight variations of $\alpha$. It is also shown that this variation is larger than that from  widely used fading parameters such as $\eta$ and $\lambda$.   This justifies  
the usefulness of the $\alpha-\eta-\mu$ and $\alpha-\lambda-\mu$ distributions.

\subsection{Relative Entropy}

The relative entropy, which is also known as the Kullback-Leibler divergence,  is a measure of the distance  between two probability distributions. It  accounts for the  inefficiency of assuming that a true distribution with PDF $p(x)$ is replaced by a distribution with PDF $q(x)$.   In the present analysis, this measure quantifies  the information loss encountered when the non-linearity parameter is not taken into account, which occurs when  $\alpha-\eta-\mu$ and $\alpha-\lambda-\mu$ distributions are replaced by the simpler $\eta-\mu$ and $\lambda-\mu$ distributions, respectively. 
\begin{lemma}
For $\{ \alpha, \eta, \eta', \gamma, \overline{\gamma}, \overline{\gamma}' \} \in \mathbb{R}^{+}$,  $\mu = 1$,  the relative entropy between   $\alpha-\eta-\mu$ and $\eta-\mu$    models is expressed as
\begin{equation} \label{relative_entropy}
D(p||q) = H(p, q) - H(p)
 \end{equation} 
 where $H(p, q)$ and $H(p)$ are given in \eqref{Entropy_a-h-m_1} and \eqref{cross_entropy_Def}, respectively. 
\end{lemma}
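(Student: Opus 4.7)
The plan is to establish \eqref{relative_entropy} directly from the definition of the Kullback--Leibler divergence and then appeal to Lemmas 1 and 2 to obtain a closed-form expression. Recall that for two continuous distributions with PDFs $p(\gamma)$ and $q(\gamma)$ supported on $[0, \infty)$, the relative entropy is
\begin{equation*}
D(p\|q) \triangleq \int_{0}^{\infty} p(\gamma) \log_{2}\!\left( \frac{p(\gamma)}{q(\gamma)} \right) {\rm d}\gamma.
\end{equation*}
The first step is to apply the elementary identity $\log_{2}(p/q) = \log_{2}(p) - \log_{2}(q)$, which linearizes the integrand and splits the divergence into two independent integrals.

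Next, each of the resulting integrals will be matched with a quantity already evaluated in the preceding lemmas. By definition, $-\int_{0}^{\infty} p(\gamma) \log_{2}(q(\gamma)) {\rm d}\gamma$ is precisely the cross entropy $H(p, q)$ with $p(\gamma)$ the $\alpha-\eta-\mu$ density of \eqref{a-h-m_1} and $q(\gamma)$ the corresponding $\eta-\mu$ density, whose closed form is furnished by Lemma 2. Likewise, $-\int_{0}^{\infty} p(\gamma) \log_{2}(p(\gamma)) {\rm d}\gamma$ is the Shannon entropy $H(p)$ provided in Lemma 1. Combining these observations yields $D(p\|q) = -H(p) - (-H(p,q)) = H(p, q) - H(p)$, which is exactly \eqref{relative_entropy}.

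There is no serious analytical obstacle in this argument because the demanding step, namely the evaluation of the underlying Bessel-function integrals via \cite[eq. (4.352.1)]{Tables} and \cite[eq. (8.310.1)]{Tables}, has already been carried out in establishing Lemmas 1 and 2. The only points requiring attention are that both integrals converge individually so that the splitting is legitimate, and that $p$ and $q$ share the same support $(0, \infty)$ so that $\log_{2}(p/q)$ is well defined $p$-almost everywhere; both conditions follow immediately from the explicit forms of the two densities together with $\mu = 1$ and positivity of the scale parameters. Substitution of \eqref{Entropy_a-h-m_1} and \eqref{cross_entropy_Def} then delivers the fully explicit closed-form expression for $D(p\|q)$ in bits/message.
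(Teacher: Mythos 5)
Your proposal is correct and follows essentially the same route as the paper: it expands $\log_{2}(p/q)$ into the difference of two integrals, identifies them as $-H(p)$ and $-H(p,q)$ via Lemmas 1 and 2, and concludes $D(p\|q)=H(p,q)-H(p)$. The paper's own proof is just a terser version of this same argument; your added remarks on convergence and common support are a harmless (and slightly more careful) elaboration.
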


\begin{proof}
By recalling that the relative entropy is defined as  
\begin{equation}
D(p||q) \triangleq  \int_{0}^{\infty} p(x)\log_{2}\left( \frac{p(x)}{q(x)} \right) {\rm d}x
\end{equation}
 the proof follows    with the aid of Lemma 1 and Lemma 2. 
\end{proof}

Notably, similar expressions for the Shannon entropy of $\alpha-\lambda-\mu$ distribution as well as the cross entropy and relative entropy between $\alpha-\lambda-\mu$ and $\lambda-\mu$ distributions are obtained  by setting $\eta = (1- \lambda)/(1 + \lambda)$ in \eqref{Entropy_a-h-m_1}, \eqref{cross_entropy_Def} and \eqref{relative_entropy}, respectively. Extensive results for all considered entropies for both $\alpha-\eta-\mu$ and $\alpha-\lambda-\mu$ fading conditions  are provided in Section IV, indicating the non-negligible effects of fading non-linearity.

\section{   Capacity Under Different Adaptation Policies }

\subsection{Capacity with Optimum Rate Adaptation}

The Shannon capacity of a channel constitutes a   theoretical upper bound for the maximum rate of data transmission and is also used as a benchmark for comparisons of  the channel capacity under different adaptation schemes \cite{B:Alouini}.  In what follows, we derive a closed-form expression for the average Shannon capacity over $\alpha-\eta-\mu$ fading channels. 

\begin{theorem}
For $\{ \alpha, \eta, \overline{\gamma}, B \} \in \mathbb{R}^{+}$ and $\mu \in \mathbb{N}$, the average channel capacity over $\alpha-\eta-\mu$ fading channels with optimum rate adaptation is expressed as follows: 
\begin{equation} \label{Cora_1}
C^{\rm \, \alpha-\eta-\mu}_{\rm \, ORA}  = B \sum_{i = 0}^{\mu -1} \frac{ \Gamma(\mu + i) \eta^{i} \mu^{\mu - i} ( \eta + 1)^{\mu - i} \overline{\gamma}^{\frac{\alpha(i - \mu)}{2}}}{2 \ln (2) i! \Gamma(\mu) \Gamma(\mu - i) (\eta - 1)^{\mu + i}}   \qquad \qquad \qquad 
\end{equation}
\begin{equation*}
\hspace{1.3cm}  \times \frac{G_{2\alpha, 2(1 + \alpha)}^{2(1 + \alpha), \alpha} \left(\frac{(1 + \eta)^{2} \mu^{2}}{\eta^{2}\overline{\gamma}^{\alpha}} \Big|^{\Delta \left(\alpha, \frac{i - \alpha \mu}{2}\right), \, \Delta \left(\alpha, 1 + \frac{i - \alpha \mu}{2}\right)}_{\Delta \left(\alpha, \frac{i - \alpha \mu}{2}\right), \, \Delta \left(\alpha, \frac{i - \alpha \mu}{2}\right)} \right)}{  2^{\alpha - 1} \pi^{\alpha - \frac{1}{2}}}   
\end{equation*}
where $B$ is  the corresponding bandwidth, $G(\cdot)$ is the Meijer G$-$function and  $\Delta(x, y) \triangleq  y/x, (y+1)/x, \dots, (y + x - 1)/x. $
\end{theorem}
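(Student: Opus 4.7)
The plan is to begin from the standard definition of the Shannon capacity under optimum rate adaptation, namely
\begin{equation*}
C^{\,\alpha-\eta-\mu}_{\rm ORA} \;=\; B\int_{0}^{\infty}\log_{2}(1+\gamma)\,p_{\gamma}(\gamma)\,{\rm d}\gamma,
\end{equation*}
with $p_{\gamma}(\gamma)$ given by \eqref{a-h-m_1}. Since $\mu\in\mathbb{N}$, the Bessel function $I_{\mu-1/2}(\cdot)$ is of half--integer order and therefore admits a \emph{finite} expansion in terms of $e^{\pm z}$ divided by half--integer powers of $z$. Substituting this expansion into \eqref{a-h-m_1} collapses the density into a finite linear combination (indexed by $i=0,\ldots,\mu-1$, with binomial--type weights that already match the coefficient structure visible in \eqref{Cora_1}) of elementary terms of the form $\gamma^{s-1}\exp(-a\,\gamma^{\alpha/2})$. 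The constants $a$ are the shifted quantities $\mathcal{C}_{1}\pm\mathcal{D}_{1}$ of Lemma~1, and the leading pre--factor simplifies to the combinatorial factor appearing in front of the Meijer $G$--function in \eqref{Cora_1}.

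Once the density is in this finite--sum form, the computation reduces to evaluating one canonical integral,
\begin{equation*}
\mathcal{I}(s,a,\alpha)\;\triangleq\;\int_{0}^{\infty}\gamma^{s-1}\ln(1+\gamma)\,e^{-a\gamma^{\alpha/2}}\,{\rm d}\gamma,
\end{equation*}
for each term. I would handle this by using the Meijer $G$ representations $\ln(1+\gamma)=G^{1,2}_{2,2}\!\left(\gamma\,\big|\,{1,1\atop 1,0}\right)$ and $e^{-x}=G^{1,0}_{0,1}\!\left(x\,\big|\,{-\atop 0}\right)$, and then invoking the Mellin--Barnes formula for the integral of a product of two $G$--functions. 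The non--integer power $\gamma^{\alpha/2}$ in the exponential is absorbed by the scaling rule of the Meijer $G$--function, after which the Gauss--Legendre multiplication theorem for the Gamma function is applied to every $\Gamma$--factor whose argument contains a fractional multiple of $\alpha$. This step is precisely what produces the long parameter strings $\Delta(\alpha,y)=\{y/\alpha,(y+1)/\alpha,\ldots,(y+\alpha-1)/\alpha\}$ that appear in the upper and lower lines of \eqref{Cora_1}, as well as the characteristic orders $G^{2(1+\alpha),\alpha}_{2\alpha,\,2(1+\alpha)}$ and the argument $(1+\eta)^{2}\mu^{2}\overline{\gamma}^{-\alpha}/\eta^{2}$ obtained by combining $\mathcal{C}_{1}+\mathcal{D}_{1}$ with $\mathcal{C}_{1}-\mathcal{D}_{1}$.

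The final step is simply to reassemble: multiplying $\mathcal{I}(s,a,\alpha)$ by the finite--sum coefficients obtained from the Bessel--function expansion, collecting powers of $\mu$, $\eta\pm 1$ and $\overline{\gamma}$, and dividing by $\ln 2$ to convert the natural logarithm into $\log_{2}$, yields \eqref{Cora_1}.

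The main obstacle I anticipate is step two: the $\alpha/2$ exponent couples to the Mellin variable of $\ln(1+\gamma)$, so the naive product formula for two Meijer $G$--functions does not apply directly. The bookkeeping needed to apply the multiplication theorem consistently across every contour integral --- and in particular to verify that the resulting $2\alpha$ upper and $2(1+\alpha)$ lower parameters coalesce into the $\Delta(\alpha,\cdot)$ blocks with exactly the shifts $(i-\alpha\mu)/2$ and $1+(i-\alpha\mu)/2$ --- is the delicate part; everything before and after is essentially substitution and simplification.
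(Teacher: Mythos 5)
Your proposal follows essentially the same route as the paper: expand $I_{\mu-1/2}$ via its finite half-integer-order representation (the paper's use of \cite[eq.\ (8.467)]{Tables}), rewrite $\ln(1+\gamma)$ and the exponentials as $G^{1,2}_{2,2}$ and $G^{1,0}_{0,1}$ Meijer $G$-functions, and integrate the product with the rational-power argument handled through the Gauss multiplication theorem. The ``delicate bookkeeping'' you flag is exactly what the paper outsources to the generalized product-integration formula \cite[eq.\ (21)]{C:Adamchik}, which is stated precisely for a $G$-function of argument $\omega x^{l/k}$ and yields the $\Delta(\alpha,\cdot)$ parameter blocks automatically, so your anticipated obstacle is resolved rather than a gap.
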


\begin{proof}
Recalling the definition of the average ergodic capacity
\begin{equation}
 C \triangleq  B\int_{0}^{\infty} \log_{2}(1 + \gamma) p_{\gamma}(\gamma) {\rm d}\gamma
\end{equation}
it immediately follows that 
\begin{equation}\label{Cora_2}
C^{\rm \, \alpha-\eta-\mu}_{\rm \, ORA}  = B \int_{0}^{\infty} \mathcal{B}_{1}  \frac{ \ln(1 + \gamma) I_{\mu - \frac{1}{2}}\left( \mathcal{D}_{1} \gamma^{\frac{\alpha}{2}} \right)}{ \gamma^{\frac{1 - \alpha \mu}{2} - \frac{\alpha}{4} } e^{\mathcal{C}_{1} \gamma^{\frac{\alpha}{2}} }} {\rm d}\gamma. 
\end{equation}
With the aid of \cite[eq. (8.467)]{Tables}, it follows that 
\begin{equation} \label{Cora_3}
C^{\rm \, \alpha-\eta-\mu}_{\rm \, ORA}  = \sum_{i = 0}^{\mu - 1} \int_{0}^{\infty}    \frac{(-1)^{i} \mathcal{B}_{1} B \Gamma(\mu + i) \ln(1 + \gamma) e^{\mathcal{D}_{1} \gamma^{\frac{\alpha}{2}} } {\rm d}\gamma}{i! \sqrt{\pi} \Gamma(\mu - i)   (2 \mathcal{D}_{1})^{i + \frac{1}{2}} \gamma^{1 - \frac{\alpha(\mu - i)}{2}}  e^{\mathcal{C}_{1} \gamma^{\frac{\alpha}{2}} } }. 
\end{equation}
Importantly, the exponential and logarithmic terms in \eqref{Cora_3} can be expressed in terms of the Meijer G-function in \cite{C:Adamchik} yielding 
\begin{equation} \label{Cora_4}
C^{\rm \, \alpha-\eta-\mu}_{\rm \, ORA}   = \sum_{i = 0}^{\mu - 1}    \int_{0}^{\infty}  \frac{(-1)^{i} \mathcal{B}_{1} B \Gamma(\mu + i)    }{i! \sqrt{\pi} \Gamma(\mu - i)   2^{i + \frac{1}{2}}  \mathcal{D}_{1}^{i + \frac{1}{2}}  }    \qquad \qquad \qquad  \qquad \qquad 
\end{equation}
\begin{equation*}
\hspace{1.75cm}\times  \frac{ G_{2, 2}^{1,2}\left( \gamma \Big| ^{1, \, 1}_{1, \, 1} \right)  G_{0, 1}^{1,0}\left( \mathcal{C}_{1} \gamma^{\frac{\alpha}{2}} \Big| ^{ -}_{ 0} \right)}{ \gamma^{1 - \frac{\alpha(\mu - i)}{2} }}  {\rm d}\gamma.  
\end{equation*}
The above integral can be expressed in closed-form in terms of \cite[eq. (21)]{C:Adamchik}. To this effect and after long but basic algebraic manipulations, one obtains \eqref{Cora_1}. This concludes the proof. 
\end{proof}
Equation \eqref{Cora_1} is expressed in closed-form and can be computed in   software packages such as Maple and Mathematica. 

%

\subsection{Capacity with Optimum Power and Rate Adaptation}

Wireless systems are often  subject to average transmit power constraints. Based on this, the maximum capacity for optimum power and rate adaptation, $C_{\rm OPRA}$, can be determined \cite{B:Alouini}.  

\begin{theorem}
For $\{ \alpha, \eta, \overline{\gamma}, \gamma_{0}, B  \} \in \mathbb{R}^{+}$ and $\mu \in \mathbb{N}$,  the channel capacity over $\alpha-\eta-\mu$ fading channels with optimum power and rate adaptation is expressed as follows: 
\begin{equation} \label{Copra_a-h-m}
\hspace{-0.1cm} C^{\alpha-\eta-\mu}_{\rm OPRA} = 2B \sum_{k = 0}^{\mu - 1} \sum_{i = 0}^{\mu - k - 1}  \frac{(-1)^{k}\eta^{\mu}  \Gamma(\mu + k) \Gamma\left( i, \frac{(1+\eta)\mu \gamma_{0}^{\frac{\alpha}{2}}}{\eta \overline{\gamma}^{\frac{\alpha}{2}}} \right) }{k! i! \alpha \ln(2) \Gamma(\mu) (\eta - 1)^{\mu + k}} 
\end{equation}
\begin{equation*}
\hspace{1.1cm} + 2B \sum_{k = 0}^{\mu - 1} \sum_{i = 0}^{\mu - k - 1}  \frac{ (-1)^{\mu}\eta^{k} \Gamma(\mu + k) \Gamma\left( i, \frac{(1+\eta)\mu \gamma_{0}^{\frac{\alpha}{2}}}{  \overline{\gamma}^{\frac{\alpha}{2}}} \right)}{k! i! \alpha \ln(2) \Gamma(\mu) (\eta - 1)^{\mu + k}  }
\end{equation*}
where $\gamma_0$ is the SNR threshold that determines transmission. 
\end{theorem}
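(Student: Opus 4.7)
The plan is to start from the standard OPRA expression $C_{\rm OPRA}=B\int_{\gamma_{0}}^{\infty}\log_{2}(\gamma/\gamma_{0})\,p_{\gamma}(\gamma)\,d\gamma$, substitute the PDF from \eqref{a-h-m_1}, and exploit the hypothesis $\mu\in\mathbb{N}$ to rewrite the half-integer-order modified Bessel function as a finite elementary sum via \cite[eq. (8.467)]{Tables}. This decomposes the factor $I_{\mu-1/2}(\mu\mathcal{D}_{1}\gamma^{\alpha/2})\,e^{-\mu\mathcal{C}_{1}\gamma^{\alpha/2}}$ into terms in which only two effective exponential rates appear: $a_{-}=\mu(\mathcal{C}_{1}-\mathcal{D}_{1})=\mu(1+\eta)/(\eta\overline{\gamma}^{\alpha/2})$, which inherits the sign weight $(-1)^{k}$ from the $e^{z}$ branch of the expansion, and $a_{+}=\mu(\mathcal{C}_{1}+\mathcal{D}_{1})=\mu(1+\eta)/\overline{\gamma}^{\alpha/2}$, which inherits $(-1)^{\mu}$ from the $e^{-z}$ branch. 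These are exactly the coefficients sitting inside the incomplete gamma functions in \eqref{Copra_a-h-m}.

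Next I would reduce every surviving integral, which has the form $\int_{\gamma_{0}}^{\infty}\gamma^{\alpha(\mu-k)/2-1}\ln(\gamma/\gamma_{0})\,e^{-a_{\pm}\gamma^{\alpha/2}}\,d\gamma$, to the canonical integral $J(n,t_{0})\triangleq\int_{t_{0}}^{\infty}t^{n-1}\ln(t/t_{0})\,e^{-t}\,dt$ by the substitution $t=a_{\pm}\gamma^{\alpha/2}$, $t_{0}^{\pm}=a_{\pm}\gamma_{0}^{\alpha/2}$. To evaluate $J(n,t_{0})$ I would integrate by parts with $u=\ln(t/t_{0})$ and $dv=t^{n-1}e^{-t}\,dt$, using $v=-\Gamma(n,t)$: both endpoint contributions vanish (at $t_{0}$ by the logarithm, at infinity by the decay of the upper incomplete gamma function), so $J(n,t_{0})=\int_{t_{0}}^{\infty}\Gamma(n,t)\,t^{-1}\,dt$. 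For the integer order $n=\mu-k$ the finite-sum representation $\Gamma(n,t)=(n-1)!\,e^{-t}\sum_{i=0}^{n-1}t^{i}/i!$ permits term-by-term integration, yielding $J(\mu-k,t_{0}^{\pm})=\Gamma(\mu-k)\sum_{i=0}^{\mu-k-1}\Gamma(i,t_{0}^{\pm})/i!$; this step introduces the inner summation index $i$ and the incomplete gamma functions of \eqref{Copra_a-h-m}.

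The rest is bookkeeping, and this is where I expect the main difficulty to sit. The two factors of $2/\alpha$ produced by the change of variable (one from $d\gamma$ and one from $\ln(\gamma/\gamma_{0})=(2/\alpha)\ln(t/t_{0}^{\pm})$), together with the $\log_{2}\!\to\!\ln$ conversion and the $\alpha/2$ prefactor extracted from the PDF, merge cleanly into the overall $2B/(\alpha\ln 2)$ coefficient of \eqref{Copra_a-h-m}; the $\Gamma(\mu-k)$ coming out of $J$ cancels the corresponding factor in the Bessel coefficients $\Gamma(\mu+k)/(k!\,\Gamma(\mu-k))$. The delicate part is collecting the $(1+\eta)$, $\mu$, $\overline{\gamma}$ and $\eta$ powers carried by $a_{\pm}^{-(\mu-k)}$ against those already present in the PDF prefactor and the Bessel expansion: the cancellations are asymmetric between the two families, leaving the surviving $\eta$-power equal to $\eta^{\mu}$ for the $a_{-}$ terms and only $\eta^{k}$ for the $a_{+}$ terms, which is precisely what distinguishes the two lines of \eqref{Copra_a-h-m}.
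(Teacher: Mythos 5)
Your proposal is correct and follows essentially the same route as the paper: it expands $I_{\mu-\frac{1}{2}}$ into its two elementary exponential branches via \cite[eq.~(8.467)]{Tables}, which indeed yields the rates $\mu(\mathcal{C}_{1}\mp\mathcal{D}_{1})$ with sign weights $(-1)^{k}$ and $(-1)^{\mu}$, and then reduces the resulting tail integrals to upper incomplete gamma functions by integration by parts. The only (cosmetic) difference is that you keep $\ln(\gamma/\gamma_{0})$ intact so that a single integration by parts plus the identity $\int_{t_{0}}^{\infty}\Gamma(n,t)\,t^{-1}\,{\rm d}t=\Gamma(n)\sum_{i=0}^{n-1}\Gamma(i,t_{0})/i!$ does all the work, whereas the paper splits the logarithm and treats the $\ln\gamma$ and $\ln\gamma_{0}$ pieces separately via \cite[eqs.~(4.352.1), (8.310.1)]{Tables}; your bookkeeping of the $2/\alpha$ factors and of the asymmetric $\eta^{\mu}$ versus $\eta^{k}$ cancellation checks out against \eqref{Copra_a-h-m}.
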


\begin{proof}
By firstly recalling the following generic expression
\begin{equation}
C_{\rm OPRA} \triangleq  B\int_{\gamma_0}^{\infty} \log_{2} \left(\frac{\gamma}{\gamma_0}\right) p_{\gamma}(\gamma) {\rm d}\gamma 
\end{equation}
as well as  substituting  \eqref{a-h-m_1} and expanding the logarithm  using \cite[eq. 8.467]{Tables}, one obtains \eqref{COPRA_1} (top of the next page). Notably, the two integrals that emerge from the second term of the numerator of \eqref{COPRA_1} can be expressed in terms of  \cite[eq. (8.310.1)]{Tables}  which yields
\begin{figure*}
\begin{equation} \label{COPRA_1}
\hspace{-0.1cm} C_{\rm OPRA}^{\alpha-\eta-\mu} =  \sum_{i = 0}^{\mu -1}    \int_{\gamma_0}^{\infty} \frac{\ln(\gamma)  \left[ (-1)^{i} e^{-\gamma^{\frac{\alpha}{2}}(\mathcal{C}_{1} - \mathcal{D}_1)} + (-1)^{\mu} e^{-\gamma^{\frac{\alpha}{2}}(\mathcal{C}_{1} + \mathcal{D}_1)}\right] -\ln(\gamma_0)  \left[ (-1)^{i} e^{-\gamma^{\frac{\alpha}{2}}(\mathcal{C}_{1} - \mathcal{D}_1)} + (-1)^{\mu} e^{-\gamma^{\frac{\alpha}{2}}(\mathcal{C}_{1} + \mathcal{D}_1)}\right]}{ i! \sqrt{\pi} \Gamma(\mu - i) 2^{i + \frac{1}{2}} \mathcal{D}_{1}^{i + \frac{1}{2}} B^{-1}    \mathcal{B}_{1}^{-1} \gamma^{1 - \frac{\alpha (\mu - i)}{2}} [\Gamma(\mu +  i)]^{-1}  } {\rm d}\gamma 
\end{equation}
\end{figure*}
\begin{equation*} 
\hspace{-0.1cm }  C_{\rm OPRA}^{\alpha-\eta-\mu} =   \sum_{i = 0}^{\mu -1} \frac{B (-1)^{i}    \mathcal{B}_{1} \Gamma(\mu +  i)   }{ i! \sqrt{2 \mathcal{D}_{1}\pi} \Gamma(\mu - i) 2^{i} \mathcal{D}_{1}^{i} }   \int_{\gamma_0}^{\infty} \frac{ \gamma^{\frac{\alpha (\mu - i)}{2} } \ln(\gamma)    }{ \gamma  e^{\gamma^{\frac{\alpha}{2}}(\mathcal{C}_{1} - \mathcal{D}_1)}   } {\rm d}\gamma   
\end{equation*}
\begin{equation}   \label{COPRA_1}
\hspace{1.05cm } + \sum_{i = 0}^{\mu -1} \frac{B (-1)^{\mu}    \mathcal{B}_{1} \Gamma(\mu +  i)   }{ i! \sqrt{2 \mathcal{D}_{1}\pi} \Gamma(\mu - i) 2^{i} \mathcal{D}_{1}^{i} }   \int_{\gamma_0}^{\infty} \frac{ \gamma^{\frac{\alpha (\mu - i)}{2} } \ln(\gamma)    }{ \gamma  e^{\gamma^{\frac{\alpha}{2}}(\mathcal{C}_{1} + \mathcal{D}_1)}   } {\rm d}\gamma  
 \end{equation}
\begin{equation*}
\hspace{1.05cm }  - B \sum_{i = 0}^{\mu -1} \frac{ (-1)^{i}    \Gamma(\mu +  i) \ln(\gamma_0) \Gamma\left(\mu - i, (\mathcal{C}_{1} - \mathcal{D}_{1}) \overline{\gamma}^{\frac{\alpha}{2}}\right)  }{ i! \sqrt{\pi} \alpha \Gamma(\mu - i)  \mathcal{B}_{1}^{-1} 2^{i - \frac{1}{2}} \mathcal{D}_{1}^{i + \frac{1}{2}} (\mathcal{C}_{1} - \mathcal{D}_{1})^{\mu - i} }  
 \end{equation*}
 \begin{equation*}
\hspace{1.05cm }  - B \sum_{i = 0}^{\mu -1} \frac{ (-1)^{i}    \Gamma(\mu +  i) \ln(\gamma_0) \Gamma\left(\mu - i, (\mathcal{C}_{1} + \mathcal{D}_{1}) \overline{\gamma}^{\frac{\alpha}{2}}\right)  }{ i! \sqrt{\pi} \alpha \Gamma(\mu - i)  \mathcal{B}_{1}^{-1} 2^{i - \frac{1}{2}} \mathcal{D}_{1}^{i + \frac{1}{2}} (\mathcal{C}_{1} + \mathcal{D}_{1})^{\mu - i} }. 
\end{equation*}
Integrating  by parts the two remaining integrals and applying   \cite[eq. (4.352.1)]{Tables}, the resulting integrals can be   evaluated   using \cite[eq. (8.310.1)]{Tables}. Based on this and after long but basic algebraic manipulations yields  \eqref{Copra_a-h-m}, completing  the proof. 
 \end{proof}

It is noted that closed-form expressions  for $C_{\rm ORA}^{\alpha-\lambda-\mu}$ and $C_{\rm OPRA}^{\alpha-\lambda-\mu}$ can be readily deduced by setting $\eta = (1 - \lambda)/(1 + \lambda)$ in \eqref{Cora_1} and \eqref{COPRA_1}. Also, the corresponding  channel capacities for $\eta-\mu$  and $\lambda-\mu$ fading channels are deduced for $\alpha = 2$.


%

\begin{figure}[ht]
\centering
   \includegraphics[width =9.5cm, height =6.875cm] {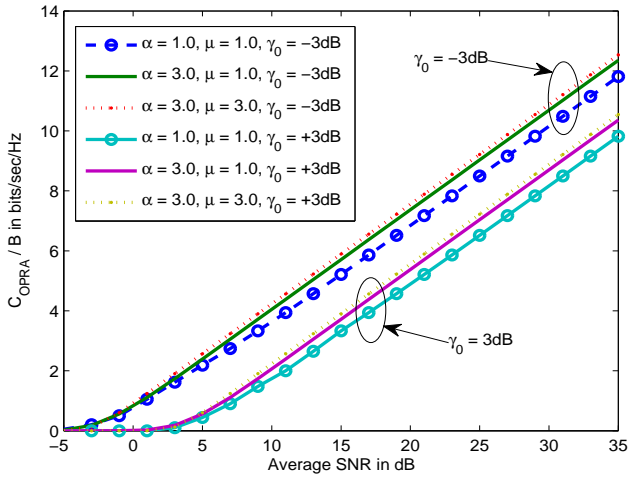}
 \label{myfigure1}
\caption{ $C_{\rm OPRA}$ vs $\overline{\gamma}$ over $\alpha-\eta-\mu$ and $\alpha-\lambda-\mu$ fading channels for $\eta = 0.6$, $\lambda = 0.25$, $\mu = 2$ and different values of the fading  non-linearity  $\alpha$.}
\end{figure}

\section {Numerical Results}
\label{Results}
The offered results are employed in quantifying the effects of the non-linearity of the propagation medium. To this end, Table I depicts the $H(p)$, $H(p, q)$ and $D(p||q)$ measures for different values of the fading parameters at $\overline{\gamma} = 15$dB.  It is shown that the Shannon entropy   is inversely proportional to the $\alpha$ parameter which indicates that  the more severe the incurred non-linearity, the more bits are required to encode the corresponding message. To this effect, it is shown that an extra bit is required  for encoding the message correctly when $\alpha = 1$, with   $\eta = 2$, $\lambda = 1/2$ and $\mu = 1$, compared to the similar case with $\alpha = 3$. On the contrary, it is observed that the lowest value of cross entropy in both cases is for $\alpha = 2$, i.e., when $\alpha-\eta-\mu$ and $\alpha-\lambda-\mu$   reduce to $\eta-\mu$ and $\lambda-\mu$, respectively. From this point, the value of $H(p,q)$ increases as the value of $\alpha$ both increases or decreases.  For example, while 7 bits/message are practically required when $\alpha = 2$, this value increases to 10 bits/message and 14 bits/message for $\alpha = 1/2$ and $\alpha = 3$, respectively. Finally, the relative entropy measure demonstrates the information lost when $\alpha$ is not considered. It is evident that  $D(p||q) = 0$  when $\alpha = 2$ and increases as $\alpha$  decreases and particularly when it increases.

The effect of fading non-linearity on $C_{\rm ORA}$ and $C_{\rm OPRA}$  is demonstrated in Table II and Fig. 1, respectively. It is shown that the corresponding spectral efficiency  is, as expected, proportional to the value of $\alpha$ in both $\alpha-\eta-\mu$ and $\alpha-\lambda-\mu$ fading conditions.  Interestingly, it is shown that the effect of $\alpha$ is more significant than the overall effect of $\eta$ and $\lambda$ parameters together with $\mu$. This illustrates the importance of considering the effects of fading non-linearity in the design and performance analysis of wireless communication systems.

\begin{table} 
\centering 
\caption{Entropies of $\alpha-\eta-\mu$ and $\alpha-\lambda-\mu$ distributions for $\overline{\gamma} = 15${\rm d}B} 
\begin{tabular}{|c|c|c|c||c|c|c|c|c|}
\hline $\alpha$-$\eta$-$\mu$&H(p)&H(p,q)&D(p$||$q)&$\alpha$-$\lambda$-$\mu$&H(p)&H(p,q)&D(p$||$q) \\ 
\hline \hline $\frac{1}{2}$-$2$-$1$&$7.03$&$9.69$&$2.66$&$\frac{1}{2}$-$\frac{1}{2}$-$1$&$6.94$&$9.42$&$2.48$   \\ 
\hline $1$-$2$-$1$&$6.87$&$8.37$&$1.50$&$1$-$\frac{1}{2}$-$1$&$6.86$&$8.21$&$1.36$  \\ 
\hline $\frac{3}{2}$-$2$-$1$&$6.56$&$6.91$&$0.36$&$\frac{3}{2}$-$\frac{1}{2}$-$1$&$6.57$&$6.87$&$0.30$     \\ 
\hline $2$-$2$-$1$&$6.28$&$6.28$&$0.00$&$2$-$\frac{1}{2}$-$1$&$6.31$&$6.31$&$0.00$     \\ 
\hline $\frac{5}{2}$-$2$-$1$&$6.05$&$7.63$&$1.59$&$\frac{5}{2}$-$\frac{1}{2}$-$1$&$6.08$&$7.52$&$1.44$     \\ 
\hline $3$-$2$-$1$&$5.83$&$13.3$&$7.45$&$3$-$\frac{1}{2}$-$1$&$5.88$&$12.5$&$6.62$     \\ 
\hline  
\end{tabular} 
\end{table}

\begin{table} 
\centering 
\caption{$C_{\rm ORA}$ for $\alpha-\eta-\mu$ and $\alpha-\lambda-\mu$ distributions} 
\begin{tabular}{||c|c||c|c||}
\hline $\alpha$-$\eta$-$\mu$ / $\overline{\gamma}$ (dB)&$C_{\rm ORA}^{\alpha-\eta-\mu} / B$&$\alpha$-$\lambda$-$\mu$ / $\overline{\gamma}$ (dB)&$C_{\rm ORA}^{\alpha-\lambda-\mu} / B$ \\ 
\hline \hline $1$-$1$-$1$ / -5dB&$0.458$& $1$-$0.1$-$1$ / -5dB&$0.358$  \\ 
\hline $1$-$1$-$1$ / 15dB&$4.432$& $1$-$0.1$-$1$ / 15dB&$4.428$  \\ 
\hline $1$-$1$-$1$ / 35dB&$10.85$&$1$-$0.1$-$1$ / 35dB&$10.85$    \\ 
\hline $2$-$1$-$1$ / -5dB&$0.529$&$2$-$0.1$-$1$ / -5dB&$0.378$   \\ 
\hline $2$-$1$-$1$ /  15dB&$4.685$&$2$-$0.1$-$1$ /  15dB&$4.674$    \\ 
\hline $2$-$1$-$1$ / 35dB&$11.25$& $2$-$0.1$-$1$ / 35dB&$11.24$     \\ 
 \hline $2$-$2$-$2$ / -5dB&$0.385$& $2$-$0.5$-$2$ / -5dB&$0.384$    \\ 
\hline $2$-$2$-$2$ /  15dB&$4.839$& $2$-$0.5$-$2$ /  15dB&$4.819$    \\ 
\hline $2$-$2$-$2$ / 35dB&$11.42$& $2$-$0.5$-$2$ / 35dB&$11.40$      \\ 
\hline $3$-$2$-$2$ / -5dB&$0.398$& $3$-$0.5$-$2$ / -5dB&$0.380$     \\ 
\hline $3$-$2$-$2$ /  15dB&$4.900$& $3$-$0.5$-$2$ /  15dB&$4.886$     \\ 
\hline $3$-$2$-$2$ / 35dB&$11.49$& $3$-$0.5$-$2$ / 35dB&$11.48$     \\ 
\hline $3$-$3$-$3$ / -5dB&$0.399$& $3$-$0.9$-$3$ / -5dB&$0.380$   \\ 
\hline $3$-$3$-$3$ /  15dB&$4.933$& $3$-$0.9$-$3$ /  15dB&$4.888$     \\ 
\hline $3$-$3$-$3$ / 35dB&$11.53$& $3$-$0.9$-$3$ / 35dB&$11.48$     \\ 
\hline    
\end{tabular} 
\end{table}


\section{Conclusion} \label{conc}

This work quantified the effects of non-linearity, $\alpha$, of the propagation medium   in wireless transmission by means of the Shannon entropy, cross entropy and relative entropy measures. The average channel capacity over generalized non-linear fading conditions was also evaluated for the cases of  optimum rate and optimum power and rate adaptations. It was   shown that the effects of  $\alpha$ are more pronounced than those of the commonly used   fading parameters $\eta$, $\lambda$ and $\mu$ combined.  

\balance
\bibliographystyle{IEEEtran}
\thebibliography{99}
\bibitem{B:Alouini}
M. K. Simon and M.-S. Alouini,
``Digital communication over fading channels,"
\emph{Wiley}, New York, 2005.

\bibitem{Yacoub_1a}
M. D. Yacoub, 
``The $\alpha-\mu$ distribution: A physical fading model for the Stacy distribution,''
\emph{IEEE Trans. Veh. Technol.,} vol. 56, no. 1, pp. 27$-$34 Jan. 2007. 

\bibitem{B:Sofotasios} 
P. C. Sofotasios,
\emph{On special functions and composite statistical distributions and their applications in digital communications over fading channels}, Ph.D. Dissertation, University of Leeds, UK, 2010.

\bibitem{Daniel} 
D. Benevides da Costa, and M. D. Yacoub,
``Average channel capacity for generalized fading scenarios,"
\emph{IEEE Commun. Lett.}, vol. 11, no. 12, pp. 949${-}$951, Dec. 2007. 

\bibitem{Yacoub_2a} 
D. Benevides da Costa, M. D. Yacoub, and J. C. S. S. Filho, 
``Highly accurate closed-form approximations to the sum of $\alpha-\mu$ variates and applications,''
\emph{IEEE Trans. Wirel. Commun.}, vol. 7, no. 9, pp. 3301$-$3306, Sep. 2008.  

\bibitem{Additional_5}
P. C. Sofotasios, and S. Freear, 
``The $\alpha-\kappa-\mu$/gamma composite distribution: A generalized non-linear multipath/shadowing fading model,''
\emph{IEEE INDICON `11}, Hyderabad, India, Dec. 2011.

\bibitem{Additional_6}
P. C. Sofotasios, and S. Freear,
``The $\alpha-\kappa-\mu$ extreme distribution: characterizing non linear severe fading conditions,'' 
\emph{ATNAC `11}, Melbourne, Australia, Nov. 2011.

\bibitem{Boss_1}
D. Zogas, and G. K. Karagiannidis,
``Infinite-series representations associated with the bivariate Rician distribution and their applications,''
\emph{IEEE Trans. Commun.}, vol.  53, no. 11, pp. 1790$-$1794, Nov. 2005.

\bibitem{Additional_1}
P. C. Sofotasios, and S. Freear, 
``The $\eta-\mu$/gamma composite fading model,''
\emph{IEEE ICWITS `10}, Honolulu, HI, USA, Aug. 2010, pp. 1$-$4.

\bibitem{Additional_7}
P. C. Sofotasios, and  S. Freear, 
``The $\eta-\mu$/gamma and the $\lambda-\mu$/gamma multipath/shadowing distributions,'' 
\emph{ATNAC `11}, Melbourne, Australia, Nov. 2011.

\bibitem{Additional_2}
P. C. Sofotasios, and S. Freear,
``The $\kappa-\mu$/gamma composite fading model,''
\emph{IEEE ICWITS `10}, Honolulu, HI, USA, Aug. 2010, pp. 1$-$4.

\bibitem{Yacoub_3a} 
E. J. Leonardo, D. Benevides da Costa,  U. S. Dias, and M. D. Yacoub, 
``The ratio of independent arbitrary $\alpha-\mu$ random variables and its application in the capacity analysis of spectrum sharing systems,''
\emph{IEEE Commun. Lett.,} vol. 16, no. 11, pp. 1776$-$1779, Nov. 2012. 

\bibitem{Additional_9}
P. C. Sofotasios, and S. Freear, 
``The $\kappa-\mu$/gamma extreme composite distribution: A physical composite fading model,''
\emph{IEEE WCNC `11}, Cancun, Mexico, Mar. 2011, pp. 1398$-$1401.

\bibitem{Additional_8}
P. C. Sofotasios, and S. Freear, 
``On the $\kappa-\mu$/gamma composite distribution: A generalized multipath/shadowing fading model,'' 
\emph{IEEE IMOC `11}, Natal, Brazil, Oct. 2011, pp. 390$-$394.

\bibitem{Boss_2}
D. S. Michalopoulos, G. K. Karagiannidis, T. A. Tsiftsis, R. K. Mallik,
``Wlc41-1: An optimized user selection method for cooperative diversity systems,''
\emph{IEEE GLOBECOM '06}, IEEE, pp. 1$-$6.

\bibitem{New_6}
P. C. Sofotasios, M. Valkama, T. A. Tsiftsis, Yu. A. Brychkov, S. Freear, G. K. Karagiannidis, 
``Analytic solutions to a Marcum $Q{-}$function-based integral and application in energy detection of unknown signals over multipath fading channels," 
\emph{in Proc. of 9$^{\rm th}$ CROWNCOM '14}, pp. 260${-}$265, Oulu, Finland, 2-4 June, 2014.

\bibitem{Boss_3}
N. D. Chatzidiamantis, and G. K. Karagiannidis,
``On the distribution of the sum of gamma-gamma variates and applications in RF and optical wireless communications,''
\emph{IEEE Trans. Commun.}, vol. 59, no. 5, pp. 1298$-$1308, May 2011.

\bibitem{Additional_4}
P. C. Sofotasios, T. A. Tsiftsis, M. Ghogho, L. R. Wilhelmsson and M. Valkama, 
``The $\eta-\mu$/inverse-Gaussian Distribution: A novel physical multipath/shadowing fading model,''
\emph{in IEEE ICC '13}, Budapest, Hungary, June 2013. 

\bibitem{C:Sofotasios_5}
S, Harput, P. C. Sofotasios, and S. Freear, 
``A Novel Composite Statistical Model For Ultrasound Applications," 
\emph{Proc. IEEE IUS `11}, pp. 1${-}$4, Orlando, FL, USA, 8${-}$10 Oct. 2011.

\bibitem{Yacoub_4a} 
E. J. Leonardo, and M. D. Yacoub, 
``The product of  two $\alpha-\mu$ variates and the composite $\alpha-\mu$ multipath-shadowing model,''
\emph{IEEE Trans. Veh. Technol.,} vol. 64, no. 6, pp. 2720$-$2725, June, 2015.

\bibitem{Additional_3}
P. C. Sofotasios, T. A. Tsiftsis, K. Ho-Van, S. Freear, L. R. Wilhelmsson, and M. Valkama, 
``The $\kappa-\mu$/inverse-Gaussian composite statistical distribution in RF and FSO wireless channels,''
\emph{in IEEE VTC '13 - Fall}, Las Vegas, USA, Sep. 2013, pp. 1$-$5.

\bibitem{Ermolova} 
N. Y. Ermolova, and O. Tirkkonen,
``The $\eta-\mu$ fading distribution with integer values of $\mu$,"
\emph{IEEE Trans. Wirel. Commun.}, vol. 10, no. 6, pp. 1976${-}$1982, June 2011.

 \bibitem{Bithas} 
P. S. Bithas, N. C. Sagias, P. T. Mathiopoulos, and G. K. Karagiannidis, 
``On the performance analysis of digital communications over generalized$-K$ fading channels,"
\emph{IEEE Commun. Lett.}, vol. 10, no. 5, pp. 353${-}$355, May 2006.

\bibitem{Sagias} 
P. S. Bithas, G. P. Efthymoglou, N. C. Sagias,
``Spectral efficiency of adaptive transmission and selection diversity on generalized fading channels,"
\emph{IET Communications}, vol. 4, no. 17, pp. 2058${-}$2064, 2010.

\bibitem{Yacoub_1}
G. Fraidenraich and M. D Yacoub, 
``The $\alpha-\eta-\mu$ and $\alpha-\kappa-\mu$ fading distributions,'' 
\emph{in Proc. IEEE ISSSTA `06},  Manaus, Brazil, 
2006.


\bibitem{Kotsop_2}
A. K. Papazafeiropoulos, and S. A. Kotsopoulos, 
``The $\alpha-\eta-\mu$ and $\alpha-\lambda-\mu$ joint envelope-phase fading distributions,'' 
\emph{in Proc. IEEE ICC `09},  Dresden, Germany, 
June 2009.  

 \bibitem{Kotsop_3a}
A. Papazafeiropoulos, and S. A. Kotsopoulos, 
``The $\alpha-\lambda-\mu$ and $\alpha-\eta-\mu$ small-scale general fading distributions: A unified approach," 
\emph{Wireless Personal Communications}, Springer, November 2009.

\bibitem{Kotsop_3b}
A. Papazafeiropoulos, and S. A. Kotsopoulos,
``Statistical properties for the envelope and phase of the $\alpha-\eta-\mu$ generalized fading channels,'' 
\emph{Wireless Personal Communications}, Springer, July 2011.

\bibitem{Paschalis} 
P. C. Sofotasios, T. A. Tsiftsis, Yu. A. Brychkov, S. Freear, M. Valkama, and G. K. Karagiannidis,
``Analytic Expressions and Bounds for Special Functions and Applications in Communication Theory,"
\emph{IEEE Trans. Inf. Theory}, vol. 60, no. 12, pp. 7798${-}$7823, Dec. 2014.


\bibitem{Cogliatti}
R. Cogliatti, R. A. A de Souza,
``A near$-100\%$ efficient algorithm for generating $\alpha-\kappa-\mu$ and $\alpha-\eta-\mu$ variates,''
\emph{in Proc. IEEE VTC `13},  Las Vegas, NV, USA, Sep. 2013. 

\bibitem{Ehab_1}
E. Salahat, and A. Hakam,
``Performance analysis of $\alpha-\eta-\mu$ and $\alpha-\kappa-\mu$ generalized mobile fading channels,''
\emph{in Proc. European Wireless `14}, Barcelona, Spain, 16$-$18 May, 2014,  pp. 992$-$997.

\bibitem{Ehab_2}
E. Salahat, and A. Hakam, 
``Novel unified expressions for error rates and ergodic channel capacity analysis over generalized fading subject to AWGGN,''
\emph{in Proc. IEEE Globecom `14}, Austin, TX, USA, 8$-$12 December, 2014, pp. 3976$-$3982.

\bibitem{Tables}
I. S. Gradshteyn, and I. M. Ryzhik, \emph{Tables of Integrals, Series, and Products }-$ 7^{\rm th}$ Ed. Academic Press, 2007.

\bibitem{C:Adamchik} 
V. S. Adamchik and O. I. Marichev,
``The algorithm for calculating integrals of hypergeometric functions and its realization in reduce system,''
\emph{in Proc. ICSAC `90},  pp. 212$-$224, Tokyo, Japan, 1990.

\end{document}